\documentclass[11pt]{article}

\usepackage[margin=0.9in]{geometry}

\usepackage{graphicx}

\usepackage{amssymb}
\usepackage{amsmath}
\usepackage{amsthm}
\usepackage{mathrsfs}
\usepackage[table]{xcolor}
\usepackage{xspace}

\allowdisplaybreaks

\theoremstyle{definition}
\newtheorem{thm}{Theorem}[section]

\newtheorem{proposition}[thm]{Proposition}

\numberwithin{thm}{section}

\newcommand{\mR}{\mathbb{R}}

\newcommand{\mV}{\mathcal{V}}

\newcommand{\fulltoday}{\ifcase\month\or
    January\or February\or March\or April\or May\or June\or
    July\or August\or September\or October\or November\or December\fi\space\number\day,\space\number\year}

\begin{document}
\title{Measuring Concentration of Power in Approval Voting Games}
\author{Takaaki Abe\thanks{Department of Economic Engineering, School of Economics, Kyushu University, 744, Motooka, Nishi-ku, Fukuoka, 819-0395, Japan.
Email: takaakiabe@econ.kyushu-u.ac.jp
}}
\date{}
\maketitle

\begin{abstract}
The ratio of voting power between a permanent member and a non-permanent member of the United Nations Security Council varies substantially across indices: approximately 100 to 1 according to the Shapley-Shubik index, 10 to 1 according to the Banzhaf index, and 2.5 to 1 according to the Deegan-Packel index. These differences make it unclear to what extent power in the UNSC is concentrated in favor of the permanent members.

To address this issue, we propose and characterize a function that measures the level of power concentration in monotonic voting games. 
The proposed measure assigns a single value to each voting game, reflecting the extent to which voting power is unevenly distributed among players. The proposed measure is proportional to the sum of squared Deegan-Packel power indices and can also be interpreted as the degree of overlap among minimal winning coalitions.
An application to the UNSC is also provided.
\end{abstract}

\noindent Keywords: concentration; measure; power; voting rule \\ 
\noindent JEL Classification: C71

\section{Introduction}\label{SEC_INTRO}

Approval voting is a system of collective decision-making widely used in various organizations, legislatures, and committees.
In the theoretical literature, approval voting is analyzed as an approval voting \textit{game} within the framework of cooperative game theory. Within this framework, a coalition of players is called a winning coalition if its members can approve a proposal without support from any players outside the coalition. An approval voting game is determined by the collection of all winning coalitions. This representation allows us to analyze a wide variety of approval rules in a uniformed framework.

This game-theoretic representation has been applied to real-world decision-making bodies. Prominent examples include the United Nations Security Council, the European Parliament, and shareholder meetings. The main focus of this literature has been the measurement of voting power of individual players. To this end, numerous power indices have been proposed and studied.
Classic examples include the Shapley-Shubik index (Shapley and Shubik, 1954), the Banzhaf index (Banzhaf, 1965), the Johnston index (Johnston, 1978), the Deegan-Packel index (Deegan and Packel, 1978), and the Holler index (Holler, 1982). 
This line of research has been remarkably successful in advancing our understanding of voting systems.

Despite these achievements, the following question has received relatively little attention: How should \textit{concentration} of voting power in an approval voting game be measured? 
To illustrate this issue, consider the United Nations Security Council (UNSC). The UNSC consists of five permanent members and ten non-permanent members. Under its current decision rule, a proposal is approved if it receives affirmative votes from all five permanent members and at least four non-permanent members.
When this rule is evaluated using the Shapley-Shubik power index, a permanent member is estimated to have about one hundred times the voting power of a non-permanent member. This large disparity offers the ``impression'' that voting power in the UNSC is highly concentrated. In contrast, when the Banzhaf index is applied, the corresponding ratio is approximately 10 to 1. Using the Deegan-Packel index further reduces the ratio to about 2.5 to 1, giving a much less concentrated impression of power distribution.
Such discrepancies may give critics of a voting rule an incentive to emphasize power indices that make the degree of concentration appear large, while giving supporters an incentive to use alternative indices that make concentration appear small. This observation suggests the need for a justified measure that evaluates the degree of power concentration in a unified manner.

The aim of this paper is to propose a measure of concentration of power for approval voting games. The measure is derived from a set of axioms.
The measure applies to voting games with different player sets, allowing comparisons between games with many players and those with fewer players.

A novelty of this paper lies in the usage of the degree of overlap among sets.
Monotonic voting games are fully described by their sets of minimal winning coalitions. Our key idea is that the extent of overlap among minimal winning coalitions provides a natural basis for measuring power concentration.
When the same players appear repeatedly across multiple minimal winning coalitions, power is concentrated in their hands. In contrast, when minimal winning coalitions involve a wide range of players, power is dispersed.

The remainder of the paper is organized as follows. 
Section \ref{SEC_PREL} introduces basic definitions and notation.
Section \ref{SEC_MEASURE} proposes axioms for measuring the concentration of voting power. 
Section \ref{SEC_CHAR} provides an axiomatic characterization of the proposed concentration measure. 
Section \ref{SEC_ALT} examines alternative measures. 
Section \ref{SEC_APL} applies the proposed measure to the UNSC and its variants. 
Section \ref{SEC_CONC} provides concluding remarks.

\section{Preliminaries}\label{SEC_PREL}
Let $N \subsetneq \mathbb{N}$ be a nonempty finite set of players, and let $2^{N}$ denote the set of all subsets of $N$. A subset of $N$ is called a \textit{coalition} of players. For a set $X$, let $|X|$ denote the number of elements of $X$. An \textit{approval voting game} is a pair $(N, W)$, where $W \subseteq 2^{N}$ ($\emptyset \notin W$ and $N\in W$) denotes the set of winning coalitions. For simplicity, we refer to an approval voting game simply as a ``voting game.''

A voting game $(N, W)$ is \textit{monotonic} if for every $S, T \in 2^{N}$ with $S \subseteq T$, $S \in W$ implies $T \in W$. A winning coalition $S \in W$ is \textit{minimal} if no proper subset of $S$ is winning. For every $W \subseteq 2^{N}$, let $M(W)$ denote the set of all minimal winning coalitions in $W$. Let $\mV^{N}$ be the set of all monotonic voting games with player set $N$.

Let $(N, W) \in \mathcal{V}^{N}$ be a voting game. Define the \textit{characteristic function} $v_W : 2^{N} \rightarrow \{0,1\}$ associated with $W$ by
\[
v_W(S) =
\begin{cases}
1 & \text{if } S \in W,\\
0 & \text{if } S \notin W.
\end{cases}
\]
For every nonempty $T\subset N$, a voting game is called a $T$-unanimity game $u_T\in \mV^N$ if
\[
u_T(S) =
\begin{cases}
1 & \text{if } S \supseteq T,\\
0 & \text{otherwise.}
\end{cases}
\]

A player $i \in N$ is a \textit{null player} in $W$ if for every $S \subseteq N \setminus \{i\}$, $v_W(S \cup \{i\}) - v_W(S) = 0$.
Players $i$ and $j$ are \textit{symmetric} in $W$ if for every $S \subseteq N \setminus \{i,j\}$, $v_W(S \cup \{i\}) =v_W(S \cup \{j\}) $. A voting game $(N,W)$ is \textit{symmetric} if all players in $N$ are symmetric in $W$.

\section{Concentration measures and axioms}\label{SEC_MEASURE}

In this section, we formulate a function that measures the concentration of power in a voting game and propose axioms for concentration measures. Such a measure can be defined as a function that assigns a real number to each voting game. Let
$\mV= \{ (N, W) \in \mV^{N} \mid N \subsetneq \mathbb{N},\ 0< |N| < \infty \}$
denote the set of all monotonic voting games with finite nonempty player sets. A \textit{concentration measure} is then defined as a function
$\mu:\mV \rightarrow \mR$.
In words, the value $\mu(N,W)$ captures the extent to which voting power is concentrated among players in the voting game $(N,W)$. A higher value of $\mu(N,W)$ indicates that power is concentrated in the hands of a few players, whereas a lower value reflects a dispersed distribution of power across many players. We now introduce axioms for concentration measures.

\subsection{Symmetry}

We begin by considering symmetric voting games. Intuitively, symmetric games should exhibit a low degree of power concentration.
More precisely, a symmetric voting game should not display a greater level of power concentration than any other, possibly asymmetric, game defined on the same player set. Furthermore, any two symmetric games on the same player set should be regarded as equivalent in terms of power concentration. These considerations motivate the following requirement.

\begin{itemize}
\item[] Strong Symmetric Game Property (\textbf{SSYM}).  
For every nonempty $N \subseteq \mathbb{N}$ and every $(N,W)$, $(N,W')$ $\in \mathcal{V}^{N}$, if $(N,W)$ is symmetric, then $\mu(N,W) \leq \mu(N,W')$.
\end{itemize}

This condition can be weakened by extracting the equality requirement among symmetric games.

\begin{itemize}
\item[] Symmetric Game Property (\textbf{SYM}).  
For every nonempty $N \subseteq \mathbb{N}$ and every $(N,W),(N,W') \in \mathcal{V}^{N}$, if $(N,W)$ and $(N,W')$ are symmetric, then $\mu(N,W) = \mu(N,W')$.
\end{itemize}

For a given player set $N$, symmetric voting games in $\mathcal{V}^{N}$ can be specified by two natural numbers $(n,k)$ with $n\geq k$, where $k$ is the threshold number of affirmative votes required for approval. In other words, $(N,W)$ is a symmetric game if and only if $W = \{ S \subseteq N \mid |S| \geq k \}$ for some natural number $k$.
The next axiom weakens the symmetric game property by restricting attention to two extreme symmetric cases, rather than requiring equality across all symmetric games. These cases correspond to the thresholds $k = n$ and $k = 1$. The former yields the all-player unanimity game $(N,u_N)$. The latter corresponds to the rule under which every nonempty coalition is winning.
The following axiom requires a concentration measure to assign the same value to these two extreme symmetric cases.

\begin{itemize}
\item[] Weak Symmetric Game Property (\textbf{WSYM}).  
For every nonempty $N \subseteq \mathbb{N}$. $\mu(N,u_N) = \mu(N,W^*_{N})$.
\end{itemize}

\subsection{Replication}

We next introduce the notion of \textit{replicating} players, which captures the idea that power concentration should decrease when each player is replaced by multiple identical copies. Replication can be interpreted as subdividing existing political or voting entities into smaller, identical units, thereby ``diluting'' individual influence while preserving the overall structure of the voting game.

To illustrate this idea, consider the following voting game $(N,W)$ with player set $N=\{1,2,3\}$ and minimal winning coalitions $M(W)=\{\{1,2\}, \{1,3\}\}$.
We replicate each player into two identical copies: for every $i\in \{1,2,3\}$, player $i$ is replaced by the players $i$ and $i'$. The resulting player set is therefore $N'=\{1,1',2,2',3,3'\}$. Replacing each original player $i$ with its replicas $i$ and $i'$ yields replicated minimal winning coalitions $M(W')=\{ \{1,1',2,2'\}, \{1,1',3,3'\} \}$.\footnote{Since players are represented by natural numbers, labels $1', 2', 3'$ may be relabeled as, for example, $4,5,6$.}
We interpret the replicated voting game $(N',W')$ as being twice as ``dilute'' as the original game $(N,W)$ in terms of concentration. In other words, although the structure of winning coalitions of the game is preserved, each original position is now shared by two identical players. 

We now present a general definition of replication. Consider two nonempty player sets $N,N' \subsetneq \mathbb{N}$ and voting games $W\in \mV^N$ and $W'\in \mV^{N'}$.
For a natural number $t$, voting game $(N', W')$ is a $t$-\textit{replica} of $(N, W)$ if there is $\rho: N\twoheadrightarrow N'$ such that for all distinct $i,j\in N$, $\rho(i)\cap \rho(j)=\emptyset$, $|\rho(i)|= |\rho(j)|=t$, and $\cup_{i\in N}\rho(i)=N'$.
As described in the example above, the interpretation is that the $t$-replicated game $(N',W')$ is $t$ times as dilute as the original game $(N,W)$ in terms of concentration. 
This intuition leads to the following axiom.

\begin{itemize}
\item[] Replication (\textbf{REP}).
For every $t\in \mathbb{N}$ and every $(N,W),(N',W')\in \mV$, if $(N',W')$ is a $t$-replica of $(N,W)$, then $\mu(N',W')=\frac{1}{t}\mu(N,W)$.
\end{itemize}

Replication is closely related to relabeling players. When $t=1$, the mapping $\rho$ simply associates each player in $N$ with a player in $N'$, and the two player sets may or may not coincide. In this sense, the replication axiom incorporates a form of player anonymity: measure $\mu$ depends on the structure of winning coalitions but not on the labels of players.

We now introduce a weak version of the replication axiom, which we call \textit{Duplication}. This axiom restricts attention to the special case in which (i) $t=2$ and (ii) $|N|=1$ (and hence $|N'|=2$). When $|N|=1$, say $N=\{i\}$, the one-person unanimity game $u_{\{i\}}$ is the only game in $\mV^N$. Its $2$-replica must therefore be a two-person unanimity game $u_{\{i,j\}}$ for some $j\in\mathbb{N}$.
The following axiom imposes the replication requirement only on this elementary case.

\begin{itemize}
\item[] Duplication (\textbf{DUP}).
For all distinct $i,j\in \mathbb{N}$, $\mu(\{i,j\},u_{\{i,j\}})=\frac{1}{2}\mu(\{i\},u_{\{i\}})$.
\end{itemize}

Therefore, it immediately follows that \textbf{REP} $\Rightarrow$ \textbf{DUP}.

\subsection{Minimal winning coalitions}
In the class of monotonic voting games, for a given player set $N$, the collection $M(W)$ fully determines the voting game $W$. This observation allows us to describe a voting game by its set of minimal winning coalitions.
The following requirement states that if two voting games, possibly defined on different player sets, share exactly the same collection of minimal winning coalitions, then they should be regarded as identical in terms of concentration.

\begin{itemize}
\item[] Minimal Winning Coalition Property (\textbf{MWC}).
For every $(N,W),(N',W') \in \mV$ with $M(W)=M(W')$, $\mu(N,W)=\mu(N',W')$.
\end{itemize}

Note that, unlike the player anonymity, this axiom requires the equivalence between the constituent members of $M(W)$ and those of $M(W')$.
This requirement is related to the null-player out property (Derks and Haller, 1999) used in the axiomatic analysis of the Shapley value.\footnote{See, for example, B\'{e}al et al. (2016) and Kongo (2024) for recent related studies.} Indeed, when $(N,W)$ and $(N',W')$ satisfy $M(W)=M(W')$, the only difference between the two games lies in null players. If $N \neq N'$, then the difference $N'\setminus N$ or $N\setminus N'$ consists entirely of null players, since any player who does not belong to any minimal winning coalition is a null player in a monotonic voting game.
Therefore, this requirement equivalently states that the presence or absence of null players is irrelevant for the evaluation of power concentration. In this sense, concentration is about the structure of decisive coalitions, and  the axiom rules out measures that are sensitive to the number or the labeling of null players.

\subsection{Average}

Additivity does not naturally fit the domain of voting games.\footnote{Dubey (1975) axiomatically studied the Shapley value within the domain of voting games and proposed an axiom called the Transfer axiom instead of Additivity. Einy and Haimanko (2011) employed this axiom and provided an axiomatic characterization of the Shapley-Shubik power index (discussed in Section \ref{SEC_ALT}).} If we take two voting games $W$ and $W'$ and represent them by their characteristic functions $v_W$ and $v_{W'}$, then any attempt to ``add'' the two games results in the function $v_W + v_{W'}$, which no longer belongs to the class of voting games.

Moreover, additivity would be conceptually inappropriate for measuring power concentration. If a measure is additive, then adding a game with concentration level 1 to a game with concentration level 1 would yield a game with concentration level 2. Such an operation lacks a meaningful interpretation especially in terms of concentration. For these reasons, an alternative aggregation principle is required.

Since the set of all monotonic voting games is a lattice, every monotonic voting game $W$ can be represented as the join of the unanimity games $u_S$ corresponding to its minimal winning coalitions $S\in M(W)$. In this sense, minimal winning coalitions serve as  ``basis'' of voting games. It is therefore natural to postulate that the concentration of a voting game should be derived from the concentration levels of these constituent unanimity games.
The most straightforward way to formalize this intuition can be described as a simple average:
\[
\mu(N,W)=\frac{1}{|M(W)|} \sum_{S\in M(W)} \mu(N,u_S).
\]
However, this naive average leads to counterintuitive conclusions. To illustrate the problem, consider $N=\{1,2,3,4\}$ and two voting games with $M(W)=\{\{1,2\},\{2,3\}\}$ and $M(W')=\{\{1,2\},\{3,4\}\}$.
The averaging formula above implies
\begin{eqnarray*}
\mu(N,W) &=& \frac{1}{2}( \mu(N,u_{\{1,2\}}) + \mu(N,u_{\{2,3\}}) ),\\
\mu(N,W') &=& \frac{1}{2}( \mu(N,u_{\{1,2\}}) + \mu(N,u_{\{3,4\}}) ).
\end{eqnarray*}
If one admits $\mu(N,u_{\{2,3\}}) = \mu(N,u_{\{3,4\}})$, then it follows that $\mu(N,W)=\mu(N,W')$.
However, this is implausible: in game $W$, player 2 belongs to all winning coalitions, which suggests a higher concentration of power than in $W'$, where no player occupies such a central position. This stems from the fact that it treats minimal winning coalitions as independent components and ignores the extent to which they \textit{overlap}. 

A standard way to quantify the degree of overlap between two nonempty coalitions $S$ and $T$ is the Dice index (Dice, 1945),
\[
\delta(S,T)=2\cdot \frac{|S\cap T|}{|S|+|T|}.
\]
The Dice index satisfies $0 \leq \delta(S,T) \leq 1$, attaining $0$ when $S$ and $T$ are disjoint and $1$ when $S=T$. It therefore provides a natural measure of overlap between coalitions.
Incorporating the Dice index as a weight, we obtain the following averaging formula:
\[
\mu(N,W)=\frac{1}{|M(W)|^2} \sum_{(S,T)\in M(W)^2} \delta(S,T) \cdot \frac{\mu(N,u_S)+\mu(N,u_T)}{2}.
\]
The interpretation is straightforward. Two minimal winning coalitions, $S$ and $T$, are randomly selected, their respective unanimity games are averaged, and this average is weighted by the degree, $\delta(S,T)$, to which the coalitions overlap. 
This idea naturally extends to any number $k$ of selected minimal winning coalitions.
For a natural number $k$, define
\[
\delta(S_1,...,S_k)=k\cdot \frac{ |S_1\cap ... \cap S_k|}{|S_1|+...+|S_k|}
\]
and consider
\[
\mu(N,W)=\frac{1}{|M(W)|^k} \sum_{(S_1,...,S_k)\in M(W)^k} \delta(S_1,...,S_k) \cdot \frac{\mu(N,u_{S_1})+...+\mu(N,u_{S_k})}{k}.
\]
When $k=1$, this formula reduces to the naive averaging formula introduced above.\footnote{When $k=1$, it holds that $\delta(S_1)=1 \cdot \frac{|S_1|}{|S_1|} =1$.}
We therefore formulate the Average axiom as follows.

\begin{itemize}
\item[] Average (\textbf{AVE}).
There is a natural number $k$ such that for every $(N,W)\in \mV$,
\[
\mu(N,W)=\frac{1}{|M(W)|^k} \sum_{(S_1,...,S_k)\in M(W)^k} \delta(S_1,...,S_k) \cdot \frac{\mu(N,u_{S_1})+...+\mu(N,u_{S_k})}{k}.
\]
\end{itemize}

For example, the measure $\mu(N,W)=1$ for every $(N,W) \in \mV$ satisfies \textbf{AVE} with $k=1$. Moreover, $\mu(N,W)=\frac{1}{|N|}$ for every $(N,W) \in \mV$ also satisfies \textbf{AVE} with $k=1$.

\section{Axiomatic characterization}\label{SEC_CHAR}
\subsection{Characterization}
The following proposition shows that the axioms \textbf{WSYM}, \textbf{DUP}, \textbf{MWC}, and \textbf{AVE} uniquely determine a particular measure.

\begin{proposition}\label{PROP_HDP}
A measure $\mu:\mV\rightarrow \mR$ satisfies \textbf{WSYM}, \textbf{DUP}, \textbf{MWC}, and \textbf{AVE} if and only if there is $a\in \mR$ such that for every $(N,W)\in \mV$,
\[
\mu(N,W)= a\cdot \sum_{i\in N} DP_i(N,W)^2,
\]
where $DP_i(N,W)$ denotes the Deegan-Packel power index, $DP_i(N,W)=\frac{1}{|M(W)|} \sum_{S:i\in S\in M(W)} \frac{1}{|S|}$.
\end{proposition}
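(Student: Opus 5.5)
The plan is to reduce everything, via \textbf{MWC} and \textbf{AVE}, to the values of $\mu$ on unanimity games. Then \textbf{WSYM} and \textbf{DUP} pin down both those values and the exponent $k$ in \textbf{AVE}. The key algebraic identity I would establish first is
\[
\sum_{i\in N} DP_i(N,W)^2=\frac{1}{|M(W)|^2}\sum_{(S,T)\in M(W)^2}\frac{|S\cap T|}{|S|\,|T|},
\]
obtained by expanding the square and exchanging sums. Together with
\[
\delta(S,T)\cdot\frac{1}{2}\Bigl(\frac{1}{|S|}+\frac{1}{|T|}\Bigr)=\frac{2|S\cap T|}{|S|+|T|}\cdot\frac{|S|+|T|}{2|S||T|}=\frac{|S\cap T|}{|S||T|},
\]
this shows that the target formula is exactly \textbf{AVE} with $k=2$ and $\mu(N,u_S)=a/|S|$. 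This identity handles the ``if'' direction and the last step of the ``only if'' direction.

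For necessity, I first use \textbf{MWC}. Since $M(u_S)=\{S\}$ for every player set containing $S$, the value $\mu(N,u_S)$ depends only on $S$; write it as $f(S)$. Next I apply \textbf{AVE} with its fixed $k$ to $W^*_N$, whose minimal winning coalitions are the singletons. In a $k$-tuple of singletons the intersection is nonempty only if all entries coincide, and then $\delta=1$. Hence $\mu(N,W^*_N)=|N|^{-k}\sum_{i\in N} f(\{i\})$, and \textbf{WSYM} gives
\[
f(N)=|N|^{-k}\sum_{i\in N}f(\{i\}).
\]
Taking $N=\{i,j\}$ and combining with \textbf{DUP}, which says $f(\{i,j\})=f(\{i\})/2$, gives $f(\{i\})/2=(f(\{i\})+f(\{j\}))/2^k$. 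Swapping the roles of $i$ and $j$ forces $f(\{i\})=f(\{j\})$ for all $i,j$; call this common value $a$. Then $a/2=2a/2^k$. If $a\neq 0$ this forces $k=2$, and so $f(N)=|N|^{-2}\cdot|N|a=a/|N|$ for every finite nonempty $N$. If $a=0$, then $f\equiv 0$, and \textbf{AVE} gives $\mu\equiv 0$, which is the formula with $a=0$ (whatever $k$ is). In the case $k=2$, substituting $f(S)=a/|S|$ into \textbf{AVE} and using the identity above yields the claimed expression.

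For sufficiency, I check each axiom for $\mu=a\sum_i DP_i^2$. \textbf{AVE} holds with $k=2$ by the identity. \textbf{MWC} holds because $DP$ depends only on $M(W)$ and null players receive $0$. For \textbf{WSYM}, both $u_N$ and $W^*_N$ give $DP_i=1/|N|$ for every $i$, so both values equal $a/|N|$. For \textbf{DUP}, $u_{\{i,j\}}$ gives $2\cdot(1/2)^2=1/2$ while $u_{\{i\}}$ gives $1$. The main obstacle is the necessity step that extracts $k=2$: \textbf{AVE} only asserts the existence of some $k$, so I must use the two-player instance of \textbf{WSYM} jointly with \textbf{DUP}. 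I must also treat the degenerate case $a=0$ separately, since there $k$ is not determined but the conclusion still holds.
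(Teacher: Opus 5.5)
Your proposal is correct and follows the paper's proof essentially step for step. Like the paper, you use \textbf{MWC} to make $\mu(N,u_S)$ depend only on $S$, and you evaluate \textbf{AVE} on the all-singletons game $W^*_{N}$, where only constant $k$-tuples contribute. You then combine \textbf{WSYM} with \textbf{DUP} on a two-player set to force $a=0$ or $k=2$, obtain $\mu(N,u_S)=a/|S|^{k-1}$, and finish with the same double-sum identity for $\sum_{i}DP_i^2$. The only cosmetic difference is how you show $f(\{i\})=f(\{j\})$: you symmetrize the \textbf{WSYM}/\textbf{AVE} equation, whereas the paper reads this off directly from two applications of \textbf{DUP}.
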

\begin{proof}
We first prove the uniqueness part.
Let $\mu:\mV\rightarrow \mR$ be a function that satisfies \textbf{WSYM}, \textbf{DUP}, \textbf{MWC}, and \textbf{AVE}.
Consider player $1\in \mathbb{N}$.
Let $a:=\mu(\{1\}, u_{\{1\}})$.
For every $j\in \mathbb{N}$ with $j\neq 1$,
\begin{equation}
\mu(\{1,j\}, u_{\{1,j\}}) \overset{\text{\textbf{DUP}}}{=} \frac{1}{2} \mu(\{1\}, u_{\{1\}}) = \frac{1}{2} a. \label{eq_0309_1439}
\end{equation}
Moreover, we have $\mu(\{1,j\}, u_{\{1,j\}}) \overset{\text{\textbf{DUP}}}{=} \frac{1}{2} \mu(\{j\}, u_{\{j\}})$. Hence, from (\ref{eq_0309_1439}), it follows that for every $j\in \mathbb{N}$,
\begin{equation}
\mu(\{j\}, u_{\{j\}})=a. \label{eq_0309_1443}
\end{equation}

\textbf{AVE} implies that there is a natural number $k$ such that for every $(N,W)\in \mV$,
\begin{eqnarray}
\mu(N,W)
&=&\frac{1}{|M(W)|^k} \sum_{(S_1,...,S_k)\in M(W)^k} \delta(S_1,...,S_k) \cdot \frac{\mu(N,u_{S_1})+...+\mu(N,u_{S_k})}{k} \nonumber\\
&=&\frac{1}{|M(W)|^k} \sum_{(S_1,...,S_k)\in M(W)^k} \frac{ |S_1\cap ... \cap S_k|}{|S_1|+...+|S_k|} (\mu(N,u_{S_1})+...+\mu(N,u_{S_k})). \label{eq_0309_1459}
\end{eqnarray}
Consider monotonic voting game $(\{1,2\}, W^*_{\{1,2\}})$, where $W^*_{\{1,2\}}$ is defined as $M(W^*_{\{1,2\}})=\{\{1\},\{2\}\}$.
We have
\begin{eqnarray}
\mu(\{1,2\}, W^*_{\{1,2\}})
&\overset{(\ref{eq_0309_1459})}{=}&\frac{1}{2^k} \left( \frac{1}{k\cdot 1}\cdot k \cdot \mu(\{1,2\}, u_{\{1\}}) + \frac{1}{k\cdot 1}\cdot k \cdot \mu(\{1,2\}, u_{\{2\}})  \right) \nonumber\\
&=&\frac{1}{2^k} \left( \mu(\{1,2\}, u_{\{1\}}) + \mu(\{1,2\}, u_{\{2\}})  \right) \nonumber\\
&\overset{\text{\textbf{MWC}}}{=}& \frac{1}{2^k} \left( \mu(\{1\}, u_{\{1\}}) + \mu(\{2\}, u_{\{2\}})  \right) \nonumber\\
&\overset{(\ref{eq_0309_1443})}{=}& \frac{1}{2^{k-1}} \cdot a. \label{eq_0309_1500}
\end{eqnarray}
Hence, we obtain
\[
\frac{1}{2^{k-1}} \cdot a \overset{(\ref{eq_0309_1500})}{=} \mu(\{1,2\}, W^*_{\{1,2\}}) \overset{\text{\textbf{WSYM}}}{=} \mu(\{1,2\}, u_{\{1,2\}}) \overset{(\ref{eq_0309_1439})}{=} \frac{1}{2}\cdot a.
\]
Therefore, we have either $a=0$ or ($a\neq 0$ and $k=2$).

In addition, let $N\subsetneq \mathbb{N}$. For every nonempty $S\subseteq N$, let $(S,W^*_{S})$ be the monotonic voting game defined as $M(W^*_{S})=\{\{i\}\mid i\in S\}$. We have
\begin{eqnarray}
\mu(N,u_S)
&\overset{\text{\textbf{MWC}}}{=}&\mu(S,u_S)\nonumber\\
&\overset{\text{\textbf{WSYM}}}{=}&\mu(S, W^*_{S})\nonumber\\
&\overset{(\ref{eq_0309_1459})}{=}& \frac{1}{|S|^k} \sum_{j\in S} \frac{1}{k} \cdot k \cdot \mu(S,u_{\{j\}})\nonumber\\
&\overset{\text{\textbf{MWC}}}{=}& \frac{1}{|S|^k} \sum_{j\in S} \frac{1}{k} \cdot k \cdot \mu(\{j\},u_{\{j\}})\nonumber\\
&\overset{(\ref{eq_0309_1443})}{=}& \frac{1}{|S|^k} \sum_{j\in S} \frac{1}{k} \cdot k \cdot a\nonumber\\
&=& \frac{a}{|S|^{k-1}}. \label{eq_0311_0931}
\end{eqnarray}

We first consider the case with $a=0$.
For every nonempty $S\subseteq N$, by $a=0$,
\begin{equation}
\mu(N,u_S) \overset{(\ref{eq_0311_0931})}{=} 0.\label{eq_0311_0943}
\end{equation}
Therefore, for every $(N,W)\in \mV$, we obtain $\mu(N,W)\overset{(\ref{eq_0309_1459}), (\ref{eq_0311_0943})}{=}0$. This is a desired functional form $\mu(N,W)= a\cdot \sum_{i\in N} DP_i(N,W)^2$ with $a=0$.

We now consider the case with $a\neq 0$ and $k=2$.
For every nonempty $S\subseteq N$, by $a\neq 0$ and $k=2$,
\begin{equation}
\mu(N,u_S) \overset{(\ref{eq_0311_0931})}{=} \frac{a}{|S|}.\label{eq_0311_0946}
\end{equation}
Therefore, for every $(N,W)\in \mV$, it follows from $k=2$ that 
\begin{eqnarray}
\mu(N,W)
&\overset{(\ref{eq_0309_1459})}{=}&\frac{1}{|M(W)|^2} \sum_{(S,T)\in M(W)^2} \frac{ |S\cap T|}{|S|+|T|} (\mu(N,u_{S})+\mu(N,u_{T}))\nonumber\\
&\overset{(\ref{eq_0311_0946})}{=}&\frac{1}{|M(W)|^2} \sum_{(S,T)\in M(W)^2} \frac{ |S\cap T|}{|S|+|T|} \left( \frac{a}{|S|}+\frac{a}{|T|} \right)\nonumber\\
&=&\frac{a}{|M(W)|^2} \sum_{(S,T)\in M(W)^2} \frac{ |S\cap T|}{|S|\cdot|T|}.\label{eq_0311_1021}
\end{eqnarray}
Moreover, we have
\begin{eqnarray*}
(\ref{eq_0311_1021})
&=&\frac{a}{|M(W)|^2} \sum_{(S,T)\in M(W)^2} \sum_{i\in S \cap T} \frac{1}{|S|\cdot|T|}\\
&=&\frac{a}{|M(W)|^2} \sum_{i\in N} \ \sum_{\substack{(S,T)\in M(W)^2 \\ i\in S \cap T}} \frac{1}{|S|\cdot|T|}\\
&=&\frac{a}{|M(W)|^2} \sum_{i\in N} \ \sum_{\substack{(S,T)\in M(W)^2 \\ i\in S, i\in T}} \frac{1}{|S|} \cdot \frac{1}{|T|}\\
&=&\frac{a}{|M(W)|^2} \sum_{i\in N} \left( \sum_{\substack{S\in M(W) \\ i\in S}} \frac{1}{|S|}\right)^2\\
&=&a \sum_{i\in N} \left( \frac{1}{|M(W)|} \sum_{\substack{S\in M(W) \\ i\in S}} \frac{1}{|S|}\right)^2\\
&=&a \sum_{i\in N} DP_i(N,W)^2.
\end{eqnarray*}
This completes the uniqueness part.

We provide the sufficiency part for completeness.
Let $a\in \mR$. Let $\mu(N,W)= a\cdot \sum_{i\in N} DP_i(N,W)^2$ for every $(N,W)\in \mV$.

\noindent \textbf{WSYM}: For every $N\subsetneq \mathbb{N}$, $DP(N,u_N)=(\frac{1}{n},...,\frac{1}{n})=DP(N,W^*_{N})$. Hence, we have $\mu(N,u_N)=\mu(N,W^*_{N})$.

\noindent \textbf{DUP}: For all distinct $i,j\in \mathbb{N}$, $\mu(\{i,j\},u_{\{i,j\}})= a \cdot (\frac{1}{2}^2+\frac{1}{2}^2) = \frac{a}{2} =\frac{1}{2}\mu(\{i\},u_{\{i\}})$.

\noindent \textbf{MWC}: For every $(N,W),(N',W') \in \mV$ with $M(W)=M(W')$, it holds that $\sum_{i\in N}DP_i(N,W)^2=\sum_{i\in N'}DP_i(N',W')^2$. Hence, we have $\mu(N,W)=\mu(N',W')$.

\noindent \textbf{AVE}: For every $(N,W) \in \mV$, as described in (\ref{eq_0311_1021}), we have $\mu(N,W)=\frac{a}{|M(W)|^2} \sum_{(S,T)\in M(W)^2} \frac{ |S\cap T|}{|S|\cdot|T|}$.
Hence, for every nonempty $S\subseteq N$, $\mu(N,u_S)=\frac{a}{|S|}$.
Therefore, it holds that
\[\mu(N,W)=\frac{1}{|M(W)|^2} \sum_{(S,T)\in M(W)^2} \frac{ |S\cap T|}{|S|+|T|} (\mu(N,u_{S})+\mu(N,u_{T})).\]
Hence, the function $\mu$ satisfies \textbf{AVE} with $k=2$.
\end{proof}

Proposition \ref{PROP_HDP} states that the characterized measure $\mu$ is the sum of squared Deegan-Packel power shares across players. However, as shown in the proof (see equation (\ref{eq_0311_1021}) with $a=1$), we can alternatively express this measure as
$\frac{1}{|M(W)|^2} \sum_{(S,T)\in M(W)^2} \frac{ |S\cap T|}{|S|\cdot|T|}$.
This expression can be understood as measuring a certain degree of overlap among minimal winning coalitions.
In addition, this measure can be viewed as an application of the Herfindahl-Hirschman (HH) index to the Deegan-Packel power index.
The HH index is a classical measure of concentration that was axiomatically characterized by Hall and Tideman (1967) in the context of industrial organization. In that literature, the HH index is defined as the sum of squared market shares of firms and is widely used to assess the degree of concentration in an industry. 
Higher values of the HH index indicate that market activity is concentrated among a small number of firms, whereas lower values correspond to a more competitive market structure.
Our measure inherits this interpretation in the domain of voting games.

There is a clear technical difference between the axiomatic characterization of the HH index and that of a power concentration measure for voting games. The key distinction lies in continuity. 
The HH index is defined on vectors of market shares. Hence, its domain is a subset of $n$-dimensional real space. Therefore, it is natural to require the original HH index to satisfy continuity on the domain.
By contrast, the domain of a power concentration measure is the class of monotonic voting games. Each voting game is represented by a set of winning coalitions.\footnote{Even when a voting game is represented by a characteristic function, it corresponds to a binary vector with $2^n$ elements taking only values in $\{0,1\}$.} As a consequence, there is no straightforward analogue of continuity on this domain.

One might conjecture that the focus on minimal winning coalitions, as embodied in axiom \textbf{MWC}, leads to the use of the DP index in Proposition \ref{PROP_HDP}. However, this conjecture is not true. 
In fact, even if one applies the HH index to the Shapley-Shubik power index, the resulting concentration measure satisfies \textbf{MWC}. This point is examined in detail in Section \ref{SEC_ALT}.

\subsection{Axiomatic independence of Proposition \ref{PROP_HDP}}
The following functions show that the axioms used in Proposition \ref{PROP_HDP} are independent.
\begin{itemize}
\item Violating \textbf{WSYM}: Let $f^1(N,W)=\frac{1}{|M(W)|} \sum_{S\in M(W)} \frac{1}{|S|}$ for every $(N,W) \in \mV$. Function $f^1$ satisfies \textbf{DUP}, \textbf{MWC}, and \textbf{AVE} (with $k=1$). Function $f^1$ violates \textbf{WSYM} as $f^1(\{1,2\}, u_{\{1,2\}}) = \frac{1}{2} \neq 1 = \frac{1}{2}(1+1) = f^1(\{1,2\}, W^*_{\{1,2\}})$, where $M(W^*_{\{1,2\}})=\{\{1\},\{2\}\}$.

\item Violating \textbf{DUP}: Let $f^2(N,W)=1$ for every $(N,W) \in \mV$. Function $f^2$ satisfies \textbf{WSYM}, \textbf{MWC}, and \textbf{AVE} (with $k=1$). Function $f^2$ violates \textbf{DUP} as $f^2(\{1,2\}, u_{\{1,2\}}) = 1 \neq \frac{1}{2} = \frac{1}{2} f^2(\{1\}, u_{\{1\}})$.

\item Violating \textbf{MWC}: Let $f^3(N,W)=\frac{1}{|N|}$ for every $(N,W) \in \mV$. Function $f^3$ satisfies \textbf{WSYM}, \textbf{DUP}, and \textbf{AVE} (with $k=1$). Function $f^3$ violates \textbf{MWC} as $f^3(\{1,2\}, u_{\{1\}}) = \frac{1}{2} \neq 1 = f^3(\{1\}, u_{\{1\}})$.

\item Violating \textbf{AVE}: Let $f^4(N,W)=\frac{1}{|\cup_{S\in M(W)}S|}$ for every $(N,W) \in \mV$. Function $f^4$ satisfies \textbf{WSYM}, \textbf{DUP}, and \textbf{MWC}. We show that $f^4$ violates \textbf{AVE}.
Assume that $f^4$ satisfies \textbf{AVE}. There is $k\in \mathbb{N}$ such that for every $(N,W) \in \mV$,
\begin{equation}
f^4(N,W)=\frac{1}{|M(W)|^k} \sum_{(S_1,...,S_k)\in M(W)^k} \delta(S_1,...,S_k) \cdot \frac{f^4(N,u_{S_1})+...+f^4(N,u_{S_k})}{k}. \label{eq_0311_1451}
\end{equation}
Therefore, we must prove that for every $k\in \mathbb{N}$, there is $(N,W) \in \mV$ that violates this equality. Let $k\in \mathbb{N}$. Consider $(\{1,2,3\}, \bar{W})$, where $M(\bar{W})=\{\{1,2\},\{1,3\}\}$. By the definition of $f^4$, we have $f^4(\{1,2,3\}, \bar{W})=\frac{1}{|\{1,2\}\cup \{1,3\}|}=\frac{1}{3}$, $f^4(\{1,2,3\}, u_{\{1,2\}})=\frac{1}{2}$, and $f^4(\{1,2,3\}, u_{\{1,3\}})=\frac{1}{2}$.
Hence, it holds that 
\[
\frac{1}{3}=f^4(\{1,2,3\},\bar{W})\overset{(\ref{eq_0311_1451})}{=}\frac{1}{2^k} \cdot \frac{1}{2k} \cdot (2^k +2) \cdot \frac{k}{2}=\frac{1}{4}+\frac{1}{2^{k+1}}.
\]
This is equivalent to $\frac{1}{12}=\frac{1}{2^{k+1}}$. However, this does not hold as $12=2^2\cdot 3$ and $k$ is a natural number. This is a contradiction.
\end{itemize}

\section{Alternative measure possibilities}\label{SEC_ALT}

\subsection{Combination of the HH index and the Shapley-Shubik power index}
The measure proposed in Proposition \ref{PROP_HDP} can be viewed as a composite of the HH index and the DP index.
One might consider replacing the DP index with the Shapley-Shubik power index (the SS index).
The SS index is defined as follows: for every $(N,W)\in \mV$ and $i\in N$,
\[
SS_{i}(N,W) = \sum_{S \subseteq N \setminus \{i\}} \frac{|S|!\, (|N|-|S|-1)!}{|N|!} \left( v_{W}(S \cup \{i\}) - v_{W}(S) \right).
\]
The following result shows that the measure obtained from combining the HH index and the SS index depends only on minimal winning coalitions.
For notational simplicity, below we write $\cup_L:= \cup_{S\in L}S$ for every nonempty $L\subseteq M(W)$.

\begin{proposition}\label{PROP_HSS}
For every $(N,W)\in \mV$, it holds that
\[
\sum_{i\in N}SS_i(N,W)^2 = \sum_{\substack{\emptyset \neq L \subseteq M(W) \\ \emptyset \neq K \subseteq M(W)}} (-1)^{|L|+|K|} \frac{|\cup_L \cap \cup_K|}{|\cup_L|\cdot |\cup_K|}.
\]
\end{proposition}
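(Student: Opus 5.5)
The plan is to write the Shapley--Shubik index as a linear combination of unanimity-game indices, using the lattice (join) structure of $W$ over its minimal winning coalitions and inclusion--exclusion. Then the sum of squares is expanded bilinearly.

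\textbf{Step 1.} Express $v_W$ through its minimal winning coalitions. For every $S\subseteq N$ we have $v_W(S)=1$ if and only if $S\supseteq T$ for some $T\in M(W)$. So $v_W=\max_{T\in M(W)}u_T$, and since $\prod_{T\in L}u_T=u_{\cup_L}$, inclusion--exclusion gives
\[
v_W=\sum_{\emptyset\neq L\subseteq M(W)}(-1)^{|L|+1}\,u_{\cup_L}
\]
as real-valued functions on $2^N$. This identity holds pointwise: for a fixed $S$, let $A=\{T\in M(W): T\subseteq S\}$. Then $u_{\cup_L}(S)=1$ exactly when $L\subseteq A$, and $\sum_{\emptyset\neq L\subseteq A}(-1)^{|L|+1}$ equals $1$ if $A\neq\emptyset$ and $0$ otherwise.

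\textbf{Step 2.} Apply linearity of the Shapley value. The formula defining $SS_i$ makes sense for any real function $v$ on $2^N$ with $v(\emptyset)=0$, and it is linear in $v$. For the unanimity game $u_T$, the standard computation gives $SS_i(u_T)=\frac{1}{|T|}$ if $i\in T$ and $0$ otherwise. This is the only routine computation; it follows from symmetry and efficiency, or directly from the formula by counting orderings in which $i$ is the last member of $T$ to arrive. Hence
\[
SS_i(N,W)=\sum_{\emptyset\neq L\subseteq M(W)}(-1)^{|L|+1}\frac{\mathbf 1[i\in\cup_L]}{|\cup_L|}.
\]

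\textbf{Step 3.} Square and sum over $i\in N$, expanding the square as a double sum over $L$ and $K$. The sign becomes $(-1)^{|L|+|K|+2}=(-1)^{|L|+|K|}$. Summing $\mathbf 1[i\in\cup_L]\mathbf 1[i\in\cup_K]$ over $i\in N$ gives $|\cup_L\cap\cup_K|$. This yields exactly the claimed expression, mirroring the computation in the proof of Proposition \ref{PROP_HDP}, where the double sum over $(S,T)$ was rearranged into a sum of squares.

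\textbf{Main obstacle.} The main point needing care is Step 2. The Shapley--Shubik index is defined only on voting games, so I will state explicitly that its defining formula extends linearly to arbitrary set functions. I will also note that the signed combination in Step 1 is merely a real-valued function, not a game; this is harmless because the identity in Step 1 is pointwise and the formula is linear in the values $v(S)$.
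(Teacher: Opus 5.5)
Your proposal is correct and follows the same route as the paper. The paper cites Einy and Haimanko (2011) for the representation $SS_i(N,W)=\sum_{\emptyset\neq L\subseteq M(W)}(-1)^{|L|+1}\eta_i(\cup_L)$ and then expands the square exactly as in your Step 3; your Steps 1--2 (inclusion--exclusion over minimal winning coalitions, plus linearity of the Shapley--Shubik formula and its values on unanimity games) supply a self-contained derivation of that cited identity.
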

\begin{proof}
Einy and Haimanko (2011)\footnote{See Equation (8) in the proof of Lemma1.} show that for every $(N,W)\in \mV$, it holds that
\[
SS_i(N,W)=\sum_{\emptyset \neq L \subseteq M(W)} (-1)^{|L|+1} \cdot \eta_i(\cup_L),
\]
where for every nonempty $S\subseteq N$,
\[
\eta_i(S)=
\begin{cases}
\frac{1}{|S|} & \text{ if $i\in S$}, \\
0 & \text{ otherwise}. \\
\end{cases}
\]
Hence, for every $(N,W)\in \mV$, we have the following:
\begin{eqnarray*}
\sum_{i\in N}SS_i(N,W)^2 
&=& \sum_{i\in N} \left( \sum_{\emptyset \neq L \subseteq M(W)} (-1)^{|L|+1} \cdot \eta_i(\cup_L) \right)^2 \\
&=& \sum_{i\in N} \left( \sum_{\emptyset \neq L \subseteq M(W)} (-1)^{|L|+1} \cdot \eta_i(\cup_L) \right) \left( \sum_{\emptyset \neq K \subseteq M(W)} (-1)^{|K|+1} \cdot \eta_i(\cup_K) \right) \\
&=& \sum_{i\in N} \sum_{\substack{\emptyset \neq L \subseteq M(W) \\ \emptyset \neq K \subseteq M(W)}} (-1)^{|L|+|K|} \cdot \eta_i(\cup_L) \cdot \eta_i(\cup_K) \\
&=& \sum_{\substack{\emptyset \neq L \subseteq M(W) \\ \emptyset \neq K \subseteq M(W)}} (-1)^{|L|+|K|} \sum_{i\in N}   \eta_i(\cup_L) \cdot \eta_i(\cup_K) \\
&=& \sum_{\substack{\emptyset \neq L \subseteq M(W) \\ \emptyset \neq K \subseteq M(W)}} (-1)^{|L|+|K|}  \frac{|\cup_L \cap \cup_K|}{|\cup_L|\cdot |\cup_K|}.
\end{eqnarray*}
This completes the proof.
\end{proof}
This concentration measure, $\sum_{i\in N}SS_i(N,W)^2$, satisfies \textbf{WSYM}, \textbf{DUP}, and \textbf{MWC}, while it violates \textbf{AVE}.

\subsection{Sum of Deegan-Packel shares to the power of $k$}
Proposition \ref{PROP_HDP} shows that the axioms characterize the sum of \textit{squared} DP shares across players. A natural question is whether this result extends beyond the quadratic case. In particular, one may ask which axioms are violated if, instead of squares, we consider the sum of DP shares raised to an arbitrary power $k$.
For every natural number $k$, define
\[
\mu^k_{\text{DP}}(N,W)= \sum_{i\in N} DP_i(N,W)^k.
\]

Table \ref{TBL_HDP} summarizes which axioms are satisfied by $\mu^k_{\text{DP}}(N,W)$ for different values of $k$.
If $k=1$ or $k=2$, the measure $\mu^k_{\text{DP}}(N,W)$ satisfies \textbf{AVE} with $k'=1$ and $k'=2$, respectively, where $k'$ denotes the number $k$ used in the definition of \textbf{AVE}.\footnote{Note that $\mu^{k=1}_{\text{DP}}$ assigns 1 to every $(N,W)$.}
However, if $k\geq 3$, \text{AVE} is violated.
Specifically, for every natural number $k'$, there exists a voting game for which the equation of \textbf{AVE} fails to hold.\footnote{For example, if $k'$ is different from the $k$ used in $\mu^k_{\text{DP}}$, then the equation of \textbf{AVE} does not hold for the voting game $W$ with $M(W)=\{\{1\},\{2\}\}$. If $k'=k$, the equation fails for the voting game $W$ with $M(W)=\{\{1,2\},\{1,3,4\}\}$.}
This observation highlights a distinction between the quadratic case and higher-order power aggregations.

\begin{table}[htbp]
  \centering
  \caption{Axioms and $\mu^k_{\text{DP}}$}
  \label{TBL_HDP}
  \begin{tabular}{l|cccc}
    \hline
    $k$ & WSYM & DUP & MWC & AVE \\
    \hline
    1 & \checkmark & - & \checkmark & \checkmark \\
    2 (Prop. \ref{PROP_HDP}) & \checkmark & \checkmark & \checkmark & \checkmark \\
    3 and more & \checkmark & - & \checkmark & - \\
    \hline
  \end{tabular}
\end{table}

\subsection{Extension of the Jaccard index and the Dice index}

As introduced in Section \ref{SEC_MEASURE}, the Dice index is defined as
\[
\delta(S_1,...,S_k)=k\cdot \frac{ |S_1\cap ... \cap S_k|}{|S_1|+...+|S_k|}.
\]
Since the Dice index directly measures the extent of overlap among sets, it should be natural to ask whether it can itself serve as a measure of power concentration.
In particular, we examine which of the axioms introduced above are satisfied if such an overlap index is applied directly to the collection of minimal winning coalitions.
Formally, we define the following concentration measure based on the Dice index:
\[
\mu_D(N,W)=|M(W)| \cdot \frac{ |\cap_{S\in M(W)} S|}{\sum_{S\in M(W)}|S|}.
\]

Moreover, as another variation, one may also consider a concentration measure derived from the Jaccard index (Jaccard, 1901), which is defined as the ratio of the size of the intersection to that of the union of sets: formally,
\[
\mu_J(N,W)=\frac{ |\cap_{S\in M(W)} S|}{|\cup_{S\in M(W)} S|}.
\]

Table \ref{TBL_DJ} summarizes which axioms are satisfied by these two measures.
\begin{table}[htbp]
  \centering
  \caption{Axioms, $\mu_D$, and $\mu_J$}
  \label{TBL_DJ}
  \begin{tabular}{c|cccc}
    \hline
     & WSYM & DUP & MWC & AVE \\
    \hline
    $\mu_D$ & - & - & \checkmark & - \\
    $\mu_J$ & - & - & \checkmark & - \\
    \hline
  \end{tabular}
\end{table}
A common drawback of these concentration measures is that they assign zero to any voting game whose minimal winning coalitions have an empty intersection.
Note that the proposed measure $\sum_{i\in N} DP_i(N,W)^2$ avoids this drawback.
This is because, as shown in the proof, it can be equivalently expressed as $\frac{1}{|M(W)|^2} \sum_{(S,T)\in M(W)^2} \frac{ |S\cap T|}{|S|\cdot|T|}$.
Since the summation ranges over all pairs of minimal winning coalitions, it includes in particular the cases with $S=T$. Therefore, even when the intersection of all minimal winning coalitions is empty, the measure avoids being zero.

\section{Application to UNSC}\label{SEC_APL}
In this section, we apply the proposed measure to the UNSC and its variants. As discussed in the introduction, the UNSC consists of five permanent members and ten non-permanent members. A decision is made if it receives affirmative votes from all five permanent members and at least four non-permanent members.

We formalize this decision rule as follows. Let $n_1$ ($n_2$) denote the number of permanent (non-permanent) members, and let $k_1$ ($k_2$) denote the number of affirmative votes required from permanent (non-permanent) members in order for a proposal to be adopted.
Under the current UNSC rule, these parameters are given by $(n_1,k_1; n_2, k_2)=(5,5; 10,4)$.
Let $DP_1(n_1,k_1; n_2, k_2)$ denote the DP index of a permanent member, and let $DP_2(n_1,k_1; n_2, k_2)$ represent the DP index of a non-permanent member.
In general, for every parameter combination $(n_1,k_1; n_2, k_2)$ with $1\leq k_1\leq n_1$ and $1\leq k_2\leq n_2$, the DP index takes the following form:
\begin{itemize}
\item $DP_1(n_1,k_1; n_2, k_2) = \frac{1}{\binom{n_1}{k_1} \binom{n_2}{k_2}} \binom{n_1-1}{k_1-1}\binom{n_2}{k_2} \frac{1}{k_1+k_2}=\frac{k_1}{n_1} \frac{1}{k_1+k_2}$,
\item $DP_2(n_1,k_1; n_2, k_2)  =\frac{1}{\binom{n_1}{k_1} \binom{n_2}{k_2}} \binom{n_1}{k_1}\binom{n_2-1}{k_2-1} \frac{1}{k_1+k_2}=\frac{k_2}{n_2} \frac{1}{k_1+k_2}$.
\end{itemize}

These expressions reveal that DP voting power is proportional to each group's relative approval requirement $\frac{k_{\ell}}{n_{\ell}}$ and is scaled by the total size of minimal winning coalitions $k_1+k_2$.
Applying the concentration measure, the power concentration of $(n_1,k_1; n_2, k_2)$ is given by
\[
\mu(n_1,k_1; n_2, k_2) = \frac{1}{(k_1+k_2)^2} \left( \frac{k_1^2}{n_1} + \frac{k_2^2}{n_2} \right).
\]
Substituting the UNSC parameters $(5,5; 10,4)$ into the formula yields
\[
\mu^* = \frac{1}{81} \left( \frac{25}{5} + \frac{16}{10} \right) = \frac{11}{135} =0.08148... .
\]
This value serves as a benchmark for comparing the current UNSC setting to alternative parameters.

Table \ref{TBL_UNSC} shows the values of $\mu(n_1,k_1; n_2,k_2)$ obtained by fixing $n_1=5$ and $n_2=10$ while varying the approval thresholds $k_1$ and $k_2$. For example, the value at $(k_1,k_2)=(5,4)$ refers to the benchmark value $\mu^*$.
In the table, higher values of the power concentration are represented by darker shades.

\begin{table}[htbp]
\centering
\caption{Concentration levels for $(n_1,n_2)=(5,10)$}\label{TBL_UNSC}
{\small
\setlength{\tabcolsep}{3pt} 
\begin{tabular}{c|cccccccccc}
\hline
$k_1 \backslash k_2$
& 1 & 2 & 3 & 4 & 5 & 6 & 7 & 8 & 9 & 10 \\
\hline
1
& \cellcolor{gray!17}0.07500
& 0.06667
& \cellcolor{gray!9}0.06875
& \cellcolor{gray!14}0.07200
& \cellcolor{gray!17}0.07500
& \cellcolor{gray!20}0.07755
& \cellcolor{gray!22}0.07969
& \cellcolor{gray!24}0.08148
& \cellcolor{gray!26}0.08300
& \cellcolor{gray!28}0.08430 \\
2
& \cellcolor{gray!28}0.10000
& \cellcolor{gray!17}0.07500
& \cellcolor{gray!10}0.06800
& 0.06667
& \cellcolor{gray!4}0.06735
& \cellcolor{gray!9}0.06875
& \cellcolor{gray!12}0.07037
& \cellcolor{gray!14}0.07200
& \cellcolor{gray!16}0.07355
& \cellcolor{gray!17}0.07500 \\
3
& \cellcolor{gray!34}0.11875
& \cellcolor{gray!25}0.08800
& \cellcolor{gray!17}0.07500
& \cellcolor{gray!11}0.06939
& \cellcolor{gray!5}0.06719
& 0.06667
& \cellcolor{gray!3}0.06700
& \cellcolor{gray!5}0.06777
& \cellcolor{gray!9}0.06875
& \cellcolor{gray!11}0.06982 \\
4
& \cellcolor{gray!38}0.13200
& \cellcolor{gray!28}0.10000
& \cellcolor{gray!23}0.08367
& \cellcolor{gray!17}0.07500
& \cellcolor{gray!12}0.07037
& \cellcolor{gray!10}0.06800
& \cellcolor{gray!4}0.06694
& 0.06667
& \cellcolor{gray!4}0.06686
& \cellcolor{gray!4}0.06735 \\
5
& \cellcolor{gray!40}0.14167
& \cellcolor{gray!33}0.11020
& \cellcolor{gray!26}0.09219
& \cellcolor{gray!19}\textbf{0.08148}
& \cellcolor{gray!17}0.07500
& \cellcolor{gray!13}0.07107
& \cellcolor{gray!9}0.06875
& \cellcolor{gray!5}0.06746
& \cellcolor{gray!2}0.06684
& 0.06667 \\
\hline
\end{tabular}
}
\end{table}

For fixed group sizes $n_1$ and $n_2$, power concentration is minimized when the relative approval requirements are equal across groups, that is, when $\frac{k_1}{n_1} = \frac{k_2}{n_2}$.
In the present case, where $n_1=5$ and $n_2=10$, this condition is satisfied when $k_2 = 2k_1$. Consequently, $\mu(n_1,k_1; n_2,k_2)$ attains its minimum at the parameter combinations $(k_1,k_2) = (1,2), \ldots, (5,10)$.

Table \ref{TBL_UNSC} also shows that the concentration measure is not monotonic in $k_1$ and $k_2$. Consider the column corresponding to $k_2=4$. As $k_1$ decreases from $5$ to $2$, the required level of consent among the permanent members falls, and power concentration declines accordingly. However, when $k_1$ decreases further from $2$ to $1$, the level of concentration increases. This is because non-permanent members become relatively more powerful than permanent members at this point.

Table \ref{TBL_n1n2} reports the values of $\mu(n_1,k_1; n_2,k_2)$ obtained by fixing $k_1=5$ and $k_2=4$, while varying $n_1=5,\ldots,11$ and $n_2=4,\ldots,14$. A comparison of Tables \ref{TBL_UNSC} and \ref{TBL_n1n2} yields the following observations.
\begin{itemize}
\item Reducing the approval requirement for permanent members from $k_1=5$ to $k_1=4$ leads to a smaller decrease in power concentration than increasing the number of permanent members from $n_1=5$ to $n_1=6$ while keeping $k_1=5$; 
$0.08148 - 0.07500 < 0.08148 - 0.07119$.

\item By contrast, for non-permanent members, increasing the approval requirement from $k_2=4$ to $k_2=5$ leads to a larger reduction in power concentration than increasing the number of non-permanent members from $n_2=10$ to $n_2=11$ while keeping $k_2=4$; $0.08148 - 0.07500 > 0.08148 - 0.07969$.
\end{itemize}

\begin{table}[htbp]
\centering
\caption{Concentration levels for $(k_1,k_2)=(5,4)$}
\label{TBL_n1n2}
{\small
\setlength{\tabcolsep}{3pt}
\begin{tabular}{c|ccccccccccc}
\hline
$n_1 \backslash n_2$
& 4 & 5 & 6 & 7 & 8 & 9 & 10 & 11 & 12 & 13 & 14 \\
\hline
5
& \cellcolor{gray!40}0.11111
& \cellcolor{gray!34}0.10123
& \cellcolor{gray!30}0.09465
& \cellcolor{gray!26}0.08995
& \cellcolor{gray!23}0.08642
& \cellcolor{gray!21}0.08368
& \cellcolor{gray!19}\textbf{0.08148}
& \cellcolor{gray!17}0.07969
& \cellcolor{gray!15}0.07819
& \cellcolor{gray!14}0.07692
& \cellcolor{gray!13}0.07584 \\
6
& \cellcolor{gray!33}0.10082
& \cellcolor{gray!28}0.09095
& \cellcolor{gray!24}0.08436
& \cellcolor{gray!21}0.07966
& \cellcolor{gray!18}0.07613
& \cellcolor{gray!16}0.07339
& \cellcolor{gray!14}0.07119
& \cellcolor{gray!12}0.06940
& \cellcolor{gray!10}0.06790
& \cellcolor{gray!9}0.06664
& \cellcolor{gray!8}0.06555 \\
7
& \cellcolor{gray!30}0.09347
& \cellcolor{gray!25}0.08360
& \cellcolor{gray!22}0.07701
& \cellcolor{gray!19}0.07231
& \cellcolor{gray!16}0.06878
& \cellcolor{gray!14}0.06604
& \cellcolor{gray!12}0.06384
& \cellcolor{gray!10}0.06205
& \cellcolor{gray!9}0.06055
& \cellcolor{gray!8}0.05929
& \cellcolor{gray!7}0.05820 \\
8
& \cellcolor{gray!27}0.08796
& \cellcolor{gray!23}0.07809
& \cellcolor{gray!20}0.07150
& \cellcolor{gray!17}0.06680
& \cellcolor{gray!15}0.06327
& \cellcolor{gray!12}0.06053
& \cellcolor{gray!11}0.05833
& \cellcolor{gray!9}0.05654
& \cellcolor{gray!8}0.05504
& \cellcolor{gray!7}0.05377
& \cellcolor{gray!6}0.05269 \\
9
& \cellcolor{gray!24}0.08368
& \cellcolor{gray!20}0.07380
& \cellcolor{gray!17}0.06722
& \cellcolor{gray!14}0.06251
& \cellcolor{gray!12}0.05898
& \cellcolor{gray!10}0.05624
& \cellcolor{gray!9}0.05405
& \cellcolor{gray!7}0.05225
& \cellcolor{gray!6}0.05075
& \cellcolor{gray!5}0.04949
& \cellcolor{gray!4}0.04840 \\
10
& \cellcolor{gray!22}0.08025
& \cellcolor{gray!18}0.07037
& \cellcolor{gray!15}0.06379
& \cellcolor{gray!12}0.05908
& \cellcolor{gray!10}0.05556
& \cellcolor{gray!8}0.05281
& \cellcolor{gray!7}0.05062
& \cellcolor{gray!5}0.04882
& \cellcolor{gray!4}0.04733
& \cellcolor{gray!3}0.04606
& \cellcolor{gray!2}0.04497 \\
11
& \cellcolor{gray!20}0.07744
& \cellcolor{gray!16}0.06756
& \cellcolor{gray!13}0.06098
& \cellcolor{gray!10}0.05628
& \cellcolor{gray!8}0.05275
& \cellcolor{gray!6}0.05001
& \cellcolor{gray!5}0.04781
& \cellcolor{gray!4}0.04602
& \cellcolor{gray!3}0.04452
& \cellcolor{gray!2}0.04325
& \cellcolor{gray!1}0.04217 \\
\hline
\end{tabular}
}
\end{table}

Moreover, in 2005, Brazil, Germany, India, and Japan jointly submitted a proposal to reform the approval rule of the UNSC. The proposal called for an expansion of the council by increasing the number of permanent members from $n_1 = 5$ to $11$ and the number of non‑permanent members from $n_2 = 10$ to $14$. The proposal was ultimately rejected.\footnote{An amendment to the UNSC approval rule requires the approval of at least two thirds of the members of the United Nations General Assembly. In addition, ratification requires the approval of at least two thirds of the UN members, including all permanent members of the Security Council.}
We, however, revisit this reform proposal from the perspective of power concentration by comparing it with the current rule. 
Since the proposal did not explicitly specify the approval thresholds $(k_1,k_2)$ for permanent and non-permanent members, we compute the level of concentration for all possible threshold pairs $k_1 \in \{1,\ldots,n_1\}$ and $k_2 \in \{1,\ldots,n_2\}$. 
Table \ref{TBL_k1k2_11_14} shows the resulting concentration levels for the proposed case $(n_1,n_2)=(11,14)$.

If the current thresholds $(k_1,k_2)=(5,4)$ are maintained under the proposed expansion, the resulting level of concentration is approximately $0.042$, which is about half of the level observed under the current rule. 
Moreover, even in the highest case, corresponding to $(k_1,k_2)=(11,1)$, the level of concentration remains lower than that of the current rule. 
This result arises because expanding the council by a total of ten additional members leads to a more diffuse distribution of power across members.

\begin{table}[htbp]
\centering
\caption{Concentration levels for $(n_1,n_2)=(11,14)$}
\label{TBL_k1k2_11_14}
{\small
\setlength{\tabcolsep}{3pt}
\begin{tabular}{c|cccccccccccccc}
\hline
$k_1 \backslash k_2$
& 1 & 2 & 3 & 4 & 5 & 6 & 7 & 8 & 9 & 10 & 11 & 12 & 13 & 14 \\
\hline
1  & \cellcolor{gray!1}0.041 & \cellcolor{gray!2}0.042 & \cellcolor{gray!6}0.046 & \cellcolor{gray!10}0.049
   & \cellcolor{gray!13}0.052 & \cellcolor{gray!15}0.054 & \cellcolor{gray!17}0.056 & \cellcolor{gray!19}0.058
   & \cellcolor{gray!20}0.059 & \cellcolor{gray!22}0.060 & \cellcolor{gray!23}0.061 & \cellcolor{gray!23}0.061
   & \cellcolor{gray!24}0.062 & \cellcolor{gray!26}0.063 \\
2  & \cellcolor{gray!9}0.048 & \cellcolor{gray!1}0.041 & \cellcolor{gray!0}0.040 & \cellcolor{gray!2}0.042
   & \cellcolor{gray!4}0.044 & \cellcolor{gray!6}0.046 & \cellcolor{gray!9}0.048 & \cellcolor{gray!10}0.049
   & \cellcolor{gray!12}0.051 & \cellcolor{gray!13}0.052 & \cellcolor{gray!15}0.053 & \cellcolor{gray!15}0.054
   & \cellcolor{gray!17}0.055 & \cellcolor{gray!19}0.056 \\
3  & \cellcolor{gray!17}0.056 & \cellcolor{gray!4}0.044 & \cellcolor{gray!1}0.041 & \cellcolor{gray!0}0.040
   & \cellcolor{gray!1}0.041 & \cellcolor{gray!2}0.042 & \cellcolor{gray!3}0.043 & \cellcolor{gray!6}0.045
   & \cellcolor{gray!7}0.046 & \cellcolor{gray!9}0.047 & \cellcolor{gray!10}0.048 & \cellcolor{gray!12}0.049
   & \cellcolor{gray!13}0.050 & \cellcolor{gray!15}0.051 \\
4  & \cellcolor{gray!23}0.061 & \cellcolor{gray!9}0.048 & \cellcolor{gray!3}0.043 & \cellcolor{gray!1}0.041
   & \cellcolor{gray!0}0.040 & \cellcolor{gray!0}0.040 & \cellcolor{gray!1}0.041 & \cellcolor{gray!2}0.042
   & \cellcolor{gray!3}0.043 & \cellcolor{gray!4}0.044 & \cellcolor{gray!6}0.045 & \cellcolor{gray!7}0.046
   & \cellcolor{gray!9}0.047 & \cellcolor{gray!10}0.048 \\
5  & \cellcolor{gray!28}0.065 & \cellcolor{gray!13}0.052 & \cellcolor{gray!6}0.046 & \cellcolor{gray!2}\textbf{0.042}
   & \cellcolor{gray!1}0.041 & \cellcolor{gray!0}0.040 & \cellcolor{gray!0}0.040 & \cellcolor{gray!0}0.040
   & \cellcolor{gray!1}0.041 & \cellcolor{gray!2}0.042 & \cellcolor{gray!3}0.043 & \cellcolor{gray!3}0.043
   & \cellcolor{gray!4}0.044 & \cellcolor{gray!6}0.045 \\
6  & \cellcolor{gray!31}0.068 & \cellcolor{gray!17}0.056 & \cellcolor{gray!9}0.048 & \cellcolor{gray!4}0.044
   & \cellcolor{gray!2}0.042 & \cellcolor{gray!1}0.041 & \cellcolor{gray!0}0.040 & \cellcolor{gray!0}0.040
   & \cellcolor{gray!0}0.040 & \cellcolor{gray!1}0.041 & \cellcolor{gray!1}0.041 & \cellcolor{gray!2}0.042
   & \cellcolor{gray!3}0.043 & \cellcolor{gray!3}0.043 \\
7  & \cellcolor{gray!34}0.071 & \cellcolor{gray!20}0.059 & \cellcolor{gray!12}0.051 & \cellcolor{gray!6}0.046
   & \cellcolor{gray!3}0.043 & \cellcolor{gray!2}0.042 & \cellcolor{gray!1}0.041 & \cellcolor{gray!0}0.040
   & \cellcolor{gray!0}0.040 & \cellcolor{gray!0}0.040 & \cellcolor{gray!0}0.040 & \cellcolor{gray!1}0.041
   & \cellcolor{gray!1}0.041 & \cellcolor{gray!2}0.042 \\
8  & \cellcolor{gray!37}0.073 & \cellcolor{gray!23}0.061 & \cellcolor{gray!14}0.053 & \cellcolor{gray!9}0.048
   & \cellcolor{gray!6}0.045 & \cellcolor{gray!3}0.043 & \cellcolor{gray!1}0.041 & \cellcolor{gray!1}0.041
   & \cellcolor{gray!0}0.040 & \cellcolor{gray!0}0.040 & \cellcolor{gray!0}0.040 & \cellcolor{gray!0}0.040
   & \cellcolor{gray!1}0.041 & \cellcolor{gray!1}0.041 \\
9  & \cellcolor{gray!38}0.074 & \cellcolor{gray!27}0.063 & \cellcolor{gray!17}0.056 & \cellcolor{gray!11}0.050
   & \cellcolor{gray!7}0.047 & \cellcolor{gray!4}0.044 & \cellcolor{gray!2}0.042 & \cellcolor{gray!1}0.041
   & \cellcolor{gray!1}0.041 & \cellcolor{gray!0}0.040 & \cellcolor{gray!0}0.040 & \cellcolor{gray!0}0.040
   & \cellcolor{gray!0}0.040 & \cellcolor{gray!0}0.040 \\
10 & \cellcolor{gray!40}0.076 & \cellcolor{gray!29}0.065 & \cellcolor{gray!19}0.058 & \cellcolor{gray!13}0.052
   & \cellcolor{gray!9}0.048 & \cellcolor{gray!6}0.046 & \cellcolor{gray!4}0.044 & \cellcolor{gray!2}0.042
   & \cellcolor{gray!1}0.041 & \cellcolor{gray!1}0.041 & \cellcolor{gray!0}0.040 & \cellcolor{gray!0}0.040
   & \cellcolor{gray!0}0.040 & \cellcolor{gray!0}0.040 \\
11 & \cellcolor{gray!40}0.077 & \cellcolor{gray!31}0.067 & \cellcolor{gray!20}0.059 & \cellcolor{gray!15}0.054
   & \cellcolor{gray!11}0.050 & \cellcolor{gray!7}0.047 & \cellcolor{gray!5}0.045 & \cellcolor{gray!3}0.043
   & \cellcolor{gray!2}0.042 & \cellcolor{gray!1}0.041 & \cellcolor{gray!1}0.041 & \cellcolor{gray!0}0.040
   & \cellcolor{gray!0}0.040 & \cellcolor{gray!0}0.040 \\
\hline
\end{tabular}
}
\end{table}

\section{Conclusion}\label{SEC_CONC}
This paper has proposed a measure of power concentration in monotonic voting games. The measure is characterized as the unique function that satisfies weak symmetry, the replication property, consistency with respect to the set of minimal winning coalitions, and an averaging property for concentration.

The proposed measure has two interpretations. First, it can be understood as the sum of squared Deegan-Packel power indices. From this perspective, the level of concentration is interpreted as the aggregation of the Deegan-Packel power index by the Herfindahl-Hirschman index. 
Second, the measure can also be interpreted as capturing the degree of overlap among winning coalitions. Since any monotonic voting game can be formulated in terms of its set of minimal winning coalitions, the extent to which these coalitions overlap provides a natural evaluation of power concentration.

This study has focused on measuring power concentration, whereas we believe that a certain kind of ``decisiveness'' of a voting game is another important issue. For example, a voting game in which a proposal is approved only if all nine players vote in favor (all-player unanimity game) and a voting game in which approval requires five out of nine votes (simple majority voting game) are both symmetric and therefore exhibit the same level of concentration.
However, the ease with which proposals can be approved differs substantially between them. This difference should be associated with the ``abundance'' of winning coalitions. More specifically, a voting game with many small winning coalitions can be expected to be more permissive in approving proposals. Establishing a formal measure of decisiveness remains an important topic for future research.

\section*{Declaration}

\noindent\textbf{Funding.} The author gratefully acknowledges the financial support from JSPS: No.22H00829.

\noindent\textbf{Conflict of interest.} The author declares that there are no conflicts of interest.

\noindent\textbf{Data availability.} This work is not based on any empirical or simulated data.

\end{document}